\documentclass[12pt]{iopart}
\usepackage{cstyle_ed}[2013/11/08]
\usepackage[breaklinks=true]{hyperref}
\usepackage{soul} 
\usepackage{color}
\usepackage{enumitem}

\newcommand{\blk}{\color{black}}

\newtheorem{lem}{Lemma}
\newcommand{\ca}[1]{{\cal {#1}}}

\usepackage{framed}

\newcommand{\comm}[1]{} 
\makeatletter
\def\namedlabel#1#2{\begingroup
   \def\@currentlabel{#2}%
   \phantomsection\label{#1}\endgroup
}
\makeatother

\hypersetup{
  colorlinks   = true, 
  urlcolor     = blue, 
  linkcolor    = blue, 
  citecolor   = red 
}

\begin{document}
\title[Global versus local optimality in feedback-controlled qubit purification]{Global versus local optimality in feedback-controlled qubit purification:
new insights from minimizing \ren entropies}
\author{ Colin~Teo$^{1,2}$, Joshua~Combes$^{3,2}$, Howard~M.~Wiseman$^3$}
\address{$^1$ Centre for Quantum Technologies, National University of Singapore, 3 Science Drive 2, Singapore 117543}
\address{$^2$ Centre for Quantum Computation and Communication Technology (Australian Research Council), Centre for Quantum Dynamics, Griffith University, Brisbane, QLD 4111, Australia}
\address{$^3$ Center for Quantum Information and Control, University of New Mexico, Albuquerque, NM 87131-0001, USA}
\eads{\mailto{colintzw@gmail.com}, \mailto{joshua.combes@gmail.com} and \mailto{H.Wiseman@griffith.edu.au}}

\begin{abstract}
 It was first shown by Jacobs, in 2003, that the process of qubit state purification by continuous measurement of one observable can be enhanced, on average, by unitary feedback control. Here, we quantify this by the reduction in any one of the family of
 \ren entropies $S_\alpha$, with  $0< \alpha < \infty$, at some terminal time, revealing the rich structure of stochastic quantum control even for this simple problem. We generalize Jacobs' original argument, which was for (the unique) impurity measure with a linear evolution map under his protocol, by replacing linearity with convexity, thereby making it applicable to \ren entropies $S_\alpha$ for $\alpha$ in a finite interval about $1$. Even with this generalization, Jacobs' argument fails to identify the surprising fact, which we prove by Bellman's principle of dynamic programming, that his protocol is {\em globally} optimal for all \ren entropies whose decrease is {\em locally} maximized by that protocol. Also surprisingly, even though there is a range of \ren entropies whose decrease is always locally maximized by the null-control protocol, that null-control protocol cannot be shown to be globally optimal in any instance. These results highlight the non-intuitive relation between local and global optimality in stochastic quantum control.
\end{abstract}
\maketitle

\section{Introduction }
The precise manipulation of quantum systems is a necessary requirement in the development of quantum devices. These devices typically, require complex dynamics to be performed in a coherent fashion on the systems to accomplish a desired task. One possible technique to exert the required controls is continuous measurement and feedback control \cite{Bel87,WisMil93b,Doh00,VarBru2007,WisMil10}. This technique has been successfully demonstrated in a variety of systems, ranging from the atomic to the mesoscopic see e.g. \cite{BusRotWil2006,VijMacSli2012,ZhoDotPea2012}.

In continuous feedback control, a system is measured weakly and controls, typically unitary, are applied conditioned on the measurement outcome. With a few notable exceptions, the dynamics of systems subject to continuous weak measurement are stochastic and  non-linear, so closed-form solutions do not generally exist. This makes it difficult to find truly optimal control protocols. Notable exceptions are quantum LQG (Linear Quadratic Gaussian) problems, for which many examples have been considered \cite{DohJac1999,BouEdwBel2005,GouBelSmol2005} and a general formalism exists \cite{WisDoh05,WisMil10}.

 Outside of LQG problems, few stochastic quantum optimal feedback control problems have been solved. The first problem of this kind to be solved was Jacobs' rapid purification problem \cite{Jacobs}, as follows. Given a qubit (Hilbert space dimension $D=2$) in the maximally mixed state $\rho= I/2$, and the ability to perform a continuous diffusive-type measurement of a Pauli operator $Z$, together with arbitrary controlled unitaries, what is the control strategy that maximizes the expected value of the purity  $P=\Tr[\rho^2]$ at some final time? Jacobs' problem, and its solution, has inspired much work, some of which we briefly summarize here. Wiseman and Ralph (WR)  \cite{WR} introduced, and solved, the related problem of finding a control to minimize the mean time of first passage (hitting time) to attain a certain purity \cite{WR}. Wiseman and Bouten \cite{WB} rigorously proved using Bellman's principle that the Jacobs and WR protocols were optimal for their respective cost functions, and also (in Jacobs' fixed-time case) for a cost function using a different measure of purity. Belavkin, Negretti, and M\o lmer confirmed the results of Wiseman and Bouten using other methods~\cite{BelNegMol09}.  Shabani and Jacobs \cite{ShaJac08,JacSha08} used viscosity solutions to find, and prove, the optimality of a purification protocol for a qutrit ($D=3$).
Many bounds have been obtained for higher dimensional systems \cite{ComWisJac08,ComWisJacOCo08,ComWis11b}, and for systems restricted to open-loop control \cite{ComWisSco10}.  Recently, Li \emph{et al.} \cite{LiShaSar13} were able to derive some nice results on the optimality qubit of purification in the presence of imperfections such as measurement inefficiencies, which rigorously captures some of the implications of Reference~\cite{ComWis11a}.

Despite this considerable body of work, there has been a lack of proofs of global optimality in the style of Jacobs~\cite{Jacobs}, which are more intuitive compared to rigorous verification theorems as in Wiseman and Bouten's work \cite{WB}. Also, other measures of purity have not been studied in detail and there are reasons to think  that Jacobs' strategy is not optimal for some measures, such as the log-impurity \cite{WR} and log-minimum-infidelity \cite{ComWisJac08}. In this work we study the problem of minimizing a whole family of purity measures --- the \ren entropies --- for a qubit subject to continuous weak measurement in a quantum feedback scenario. As we will show, the study of this family of control objectives reveals various regimes of optimal protocols and highlights some non-intuitive features of stochastic optimal control. Further, our study reveals new limits to the physical process of extracting entropy from a quantum system using measurement.

This paper is organized as follows. In Section \ref{sec:model}, we elaborate on the measurement model and the control strategies. In Section \ref{sec:JacobsProof}, we first restate Jacobs' proof of optimality, and generalize it to include a larger class of cost functions. We next show in Section \ref{sec:HJB}, the criterion for a protocol to be globally optimal at reducing a particular cost function, and show that, for a class of problems including Jacobs' problem as a special case, global optimality for deterministic protocols is true if and only if the protocol is locally optimal everywhere. In Section \ref{sec:WRNotOptimal}, we proceed to show that the condition of local optimality, contrary to intuition, does not allow a proof of global optimality in the case of the Wiseman-Ralph protocol. (This is, essentially, the null-control protocol, and is non-deterministic.)  We conclude in Sec. \ref{sec:conclude} with a summary and discussion of future research.

\section{Measurement model and {control} strategies} \label{sec:model}
The physical model that we consider is identical to those considered previously in \cite{Jacobs,WR,WB}, which is the continuous measurement of a qubit in the $z$ basis, with the ability to control the Hamiltonian evolution of the qubit. Without loss of generality, we assume that the initial state of the qubit satisfies $\Tr[\rho_0 \sig_y] = 0$. Then, to control the evolution of the qubit, we need only implement Hamiltonian controls in the $y$ basis. The equation of motion for the qubit state matrix, conditioned on the result of a measurement, is then
\begin{eqnarray}
  d \rho &= -i \half{\Omega_t} [\sig_y,\rho] dt + (\sig_z \rho \sig_z - \rho) dt + (\sig_z \rho + \rho \sig_z - 2 \Tr[\rho \sig_z] \rho)dW,
\end{eqnarray}
where $\Omega_t$ is the control input, and $dW$ is the stochastic Wiener increment satisfying $dW^2 = dt$. The Wiener increment is related to the measurement result $dR$ via $dW = dR -  2 \Tr[\rho \sig_z]$. The above equation can be equivalently represented in terms of its Bloch components $(x, y, z) = (\Tr[\rho \sig_x],\Tr[\rho \sig_y],\Tr[\rho \sig_z])$,
\begin{eqnarray}
  dx &= (-2x + \Omega_t z) dt - 2 xz dW \\
  dz &= -\Omega_t x dt + 2(1-z^2)dW
\end{eqnarray}
and $y=0$. Thus the length-squared of the Bloch vector is $r^2 = x^2+z^2$, from which the purity $P$ is defined as $\Tr[\rho^2]=\half{1+r^2}$. It turns out that a more convenient parametrization of this problem is given by the impurity, $L=1-P=\half{1-r^2}$, and the angle which the Bloch vector makes with the $z-$axis, $\theta$, such that $x = r\sin\theta$ and $z=r\cos\theta$. This change of variables gives the following stochastic differential equations (SDEs)
\begin{eqnarray}
  dL &= -4L \Big\{ \big[1-(1-2L)u^2 \big]dt + \sqrt{1-2L}\,u\, dW \Big\},  \\
  d\theta &= \big( \Omega_t + f(\theta,L)\big) dt + g(\theta,L) dW,
\end{eqnarray}
where $u=\cos\theta$ and $f$ and $g$ are functions whose form we will not need. As in earlier work \cite{Jacobs,WR,WB}, we make the simplifying assumption that through $\Omega_t$ we are able to control $\theta$, and hence $u$, directly, which is why the form of $f$ and $g$ is unimportant. Physically, this is a realistic assumption for some solid state systems where the measurement time scale is much slower than the control time scale. 

\subsection{ Control goal }
Now that we have a characterization of {the} system dynamics in terms of  the change in impurity ($dL$) and the control ($u_t$) we now characterize our control goal.
 In the remainder of the article, we will be considering the problem of minimizing the \ren entropies of order
 $\alpha$
 \begin{equation} \label{baratheon}
   \vS_\alpha  (\rho) = \inv{1-\alpha}\ln \Tr(\rho^\alpha),
 \end{equation}
given some initial impurity $L(t_0)=L_0$, and a terminal time $T=t-t_0$. The \ren entropies are defined for $\alpha\geq0$ with range $0\leq S_\alpha(\rho) \leq \log \textrm{rank} (\rho)$, and is zero only for pure states. The consideration of \ren entropies can be physically motivated by noticing that the \ren entropy of order $q$, $\vS_q$ is the $q^{-1}$ derivative of the Gibbs Free energy of the system \cite{baez}. The von Neumann entropy is a special case of the \ren entropy when the order approaches 1, which is well-known to be negative the derivative of the Free energy with respect to temperature. When $\alpha = 2$ the \ren entropy, often called the ``Collision entropy'', is minus the logarithm of the purity: $-\ln (P) = -\ln(1-L)$.

We assume no constraints or costs on the Hamiltonian control; effectively, we allow instantaneous control unitaries. This means we need consider only the single SDE
\begin{equation}
 dL = -4L \Big\{ \big[1-(1-2L)u_t^2 \big]dt + \sqrt{1-2L}\,u_t\, dW \Big\}, \label{eqn:Lsde}
\end{equation}
where $u_t$ may be set arbitrarily in the range $[-1,1]$ without cost. 

It is worth noting that similar stochastic control problems for a qubit have been studied before \cite{Jacobs, WR, WB,ShaJac08, JacSha08, BelNegMol09, LiShaSar13}. Two kinds of problems are typically studied. The first, as stated above, aims to find a control law to minimize the value of the cost function at a fixed terminal time averaged over all realizations. We will call this {\em control goal I}. The second is concerned with finding a control law to minimize the mean time taken to attain a fixed value of the cost function, a time which is sometimes called the mean time of first passage or the expected hitting time. We will call this {\em control goal II}. For the most part the impurity $L=1-P$ has been used as the cost function. (The two exceptions are References~\cite{WB,LiShaSar13} where the negative of the length of the Bloch vector $-\sqrt{1-2L}$ was also used.) Summarizing the main results of this literature: (1) for control goal I Jacobs' control strategy \cite{Jacobs} of $u=0$ (keeping the state unbiased with respect to the measurement basis) has been proven to be optimal using local optimality arguments and linearity~\cite{Jacobs} and later dynamic programming~\cite{WB}, (2) for control goal II the Wiseman-Ralph strategy $u=1$ (keeping the state diagonal in the measurement basis) has been shown to be optimal by dynamic programming~\cite{WB}.

\section{Extending Jacobs' argument: local optimality plus convexity} \label{sec:JacobsProof}
In this section, we first restate Jacobs' intuitive proof of optimality given in Reference\,\cite{Jacobs}. We then extend it  (by replacing linearity with convexity) so that it can be applied to minimizing a subfamily of the family of \ren entropies with nonzero measure.   When a control strategy minimizes its cost function at all times we refer to it as {\em locally optimal}. Local optimality was first heuristically defined in this context, in Reference~\cite[Paragraph 3]{DohJacJun01} for any cost function.

In Reference\,\cite{Jacobs}, Jacobs proves that his protocol is optimal by first showing that it is locally optimal at reducing the impurity everywhere. Then, using linearity of the relevant evolution equation, Jacobs shows that  this  implies global optimality of the protocol.
Before we restate his proof, we first define more rigorously what it means for a protocol to be locally optimal. Although the evolution is continuous, we will follow Jacobs in considering the discrete time problem. We assume that the continuous time problem is divided up into $N$ steps with $N\gg 1$, and define the time step $\delta = T/N$.
\begin{mydef}[General cost function] \label{def:GenCost}
  Let $F : [0,\half 1] \to \ca F$ be a general cost function defined on the interval $\mathcal{F}\subset\mathbb{R}$ such that for any $L,L\,' \in [0,\half 1]$, 
  \begin{eqnarray}
    L < L\,'   \Longrightarrow F(L) < F(L\,'),
  \end{eqnarray}
  which is to say that the cost function is a strictly monotonic increasing function. 
\end{mydef}


\begin{mydef}[Local optimality] \label{def:LocalOpt}
  Let $L_u(t)\in [0,\half 1]$ be the {impurity} at time $t$ under the protocol $u(t)$. A protocol $u^*(t)$ is locally optimal at reducing the function $F[L_u(t)]$, if $\forall \,u(t)\in [-1,1]$, 
  the condition
  \begin{equation}
    E(F[L_{u^*}(t+\delta)]|L(t)) \leq E(F[L_u(t+\delta)]|L(t)),
  \end{equation}
  is satisfied, where $E[.]$ denotes an expectation.
\end{mydef}

\noindent {{\bf Remark 1.} The function $F[.]$ can, of course, be the identity function}.  For this case, we have $F[L] = L$, and it was previously shown in Reference \cite{Jacobs} that  Jacobs' protocol, $u(t) \equiv 0$, is locally optimal. That is, it is optimal for reducing the impurity $L$ in any time step. \\

 Next, we define a function which increments the impurity under some protocol $u$ by one time step.
\begin{mydef}[General increment]
 Let $C_u$ be the increment under the protocol $u$, such that it takes as argument a general cost function $F(L)$ and increments the value of $L$ under the protocol u, \ie
\begin{equation}
  C_u(F(L)) = F(L+\Delta_u L),
\end{equation}
 where $\Delta_u L = L(t+\delta)|_{L(t)=L}-L$.
\end{mydef}

\noindent {{\bf Remark 2.} As is evident from \Eref{eqn:Lsde}, for $u=0$, $L$ evolves deterministically, with $\Delta_{u=0}L = -\eps L$ where $\eps=4 \delta$ is a constant. 
That is, 
\begin{equation}
\ti C \equiv C_{u=0}
\end{equation}
is a function in the usual sense, mapping $\ca F \to \ca F$.

\begin{lem}[Monotonicity of $\ti C$] \label{lem:MonoC}
    The increment function for $u=0$ is monotonic increasing, such that
    \begin{equation}
      F(L) < F(L\,') \Longrightarrow \ti C[ F(L) ] < \ti C[F(L\,')].
     \end{equation}   
\end{lem}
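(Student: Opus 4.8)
The plan is to reduce the statement to the strict monotonicity of $F$ by using the explicit deterministic form of the $u=0$ increment recorded in Remark 2. First I would unfold the definition of the general increment at $u=0$: since $\Delta_{u=0}L = -\eps L$, we have
\begin{equation}
\ti C[F(L)] = F(L + \Delta_{u=0}L) = F\big((1-\eps)L\big),
\end{equation}
so that $\ti C$ simply rescales the impurity by the factor $1-\eps$ before re-applying the cost function. This also confirms the claim of Remark 2 that $\ti C$ is a genuine function $\ca F \to \ca F$: because $F$ is strictly monotonic by Definition~\ref{def:GenCost} it is injective, so the argument $(1-\eps)L$ --- and hence the output $F\big((1-\eps)L\big)$ --- is determined by the input value $F(L)$.

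Next I would exploit the hypothesis together with strict monotonicity applied twice. Assuming $F(L) < F(L\,')$, strict monotonicity of $F$ makes this equivalent to $L < L\,'$. Since $\eps = 4\delta$ with $N \gg 1$, the step size satisfies $0 < \eps < 1$, so the multiplier $1-\eps$ is strictly positive and rescaling preserves the direction of the inequality, giving $(1-\eps)L < (1-\eps)L\,'$. Both rescaled values remain admissible arguments of $F$, since $0 \le (1-\eps)L \le (1-\eps)\,\half 1 < \half 1$ places them in the domain $[0,\half 1]$. The concluding step would then be one further application of the strict monotonicity of $F$ to $(1-\eps)L < (1-\eps)L\,'$, which yields $F\big((1-\eps)L\big) < F\big((1-\eps)L\,'\big)$; by the first display this is exactly $\ti C[F(L)] < \ti C[F(L\,')]$.

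Since the whole argument is a short chaining of order-preserving operations --- $F$, multiplication by the positive constant $1-\eps$, and $F$ again --- I do not expect a genuine obstacle. The only two points that need explicit care are that $1-\eps > 0$, so that the rescaling does not flip the inequality (automatic in the assumed regime $N \gg 1$), and that the rescaled impurities stay inside $[0,\half 1]$ so that $F$ may legitimately be applied to them; both are immediate from the bounds above.
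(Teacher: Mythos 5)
Your proof is correct and follows essentially the same route as the paper's: reduce $F(L) < F(L\,')$ to $L < L\,'$ by strict monotonicity of $F$, show the deterministic $u=0$ evolution map preserves this order (you via the explicit factor $1-\eps$ from Remark 2, the paper via the solution $L = L_0 e^{-4t}$), and apply monotonicity of $F$ once more. Your added checks that $1-\eps > 0$ and that the rescaled impurities stay in $[0,\half 1]$ are slightly more careful than the paper's version but do not change the substance of the argument.
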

\begin{proof}
    Since the cost function $F$ defined in \ref{def:GenCost} is strictly monotonic, this also implies that if
    \begin{equation}
      F(L) < F(L\,'),
    \end{equation}
    we also have
    \begin{equation}
      L < L\,'.
    \end{equation}
    Next, for the protocol $u=0$, the solution to \Eref{eqn:Lsde} is easily shown to be $L=L_0 e^{-4t}$, which is strictly monotonic. Thus,
    \begin{eqnarray}
      L < L\,' \Longrightarrow L + \Delta_{u=0} L < L\,' + \Delta_{u=0} L\,'.
    \end{eqnarray}
    Using again the monotonicity of $F$ and the definition of $\ti C$, we arrive at the desired result.
\end{proof}

Using the fact that $\ti C$ is locally optimal, by definition~\ref{def:LocalOpt}, for $F$
the identity function, \blk Jacobs proves the following:
\begin{thm}[Jacobs~\cite{Jacobs}] \label{thm:Jacobs}
  For the problem of minimizing the impurity $L$ in some finite time interval $T$, the globally optimal protocol is the repeated application of the protocol $u=0$ at each time step.
\end{thm}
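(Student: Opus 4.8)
The plan is to follow Jacobs' strategy of combining the one-step \emph{local} optimality of $u=0$ (Remark 1) with the \emph{linearity} of the deterministic increment $\ti C$ to promote a single-step bound into a global one, thereby avoiding any appeal to the full Bellman recursion used later in Section \ref{sec:HJB}. Concretely, working in the discretized problem with $N$ steps and writing $t_k=t_0+k\delta$ so that $L(t_N)=L(T)$, I would show that \emph{every} admissible control sequence obeys $E[L(T)]\ge(1-\eps)^N L_0$, and that the deterministic protocol $u\equiv0$ attains this bound with equality; since $F[L]=L$ is strictly monotonic by Definition \ref{def:GenCost}, minimizing $E[L(T)]$ is the same as minimizing the cost.

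First I would record the one-step conditional bound supplied by local optimality. By Remark 1, $u=0$ minimizes $E\big(L(t_{k+1})\mid L(t_k)\big)$ over $u\in[-1,1]$, and because the $u=0$ evolution is deterministic (Remark 2) this minimum equals $\ti C[L(t_k)]=(1-\eps)L(t_k)$; hence for any control at step $k$,
\[
  E\big(L(t_{k+1})\mid L(t_k)\big)\;\ge\;\ti C[L(t_k)]\;=\;(1-\eps)\,L(t_k),
\]
with equality precisely for $u=0$. Next I would take the total expectation of this inequality. This is the step where linearity enters decisively: because $\ti C$ is linear, $E\big[\ti C[L(t_k)]\big]=(1-\eps)\,E[L(t_k)]$ is determined by $E[L(t_k)]$ alone, so averaging the one-step bound over the distribution of $L(t_k)$ yields the closed recursion $E[L(t_{k+1})]\ge(1-\eps)\,E[L(t_k)]$. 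Iterating from $k=0$ to $N-1$ --- using the monotonicity of $\ti C$ (Lemma \ref{lem:MonoC}), i.e.\ positivity of the factor $(1-\eps)$, to preserve the inequality at each composition --- gives $E[L(T)]\ge(1-\eps)^N L_0$, which is exactly the value realized deterministically by $u\equiv0$. This establishes global optimality of the repeated $u=0$ protocol.

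The step I expect to be the crux is the passage from the state-by-state conditional bound to an iterable bound on the unconditional mean. For the identity cost this works only because $\ti C$ is linear: the expected one-step-optimal impurity is a function of $E[L(t_k)]$ alone, which is what closes the recursion. For a nonlinear cost function $F$ the analogous object $E\big[\ti C[F(L(t_k))]\big]$ depends on the entire distribution of $L(t_k)$, not merely its mean, and the iteration breaks. This is precisely why Jacobs' proof is tied to the linear impurity measure, and it signals the route to the generalization pursued in the remainder of this section, where linearity is relaxed to convexity so that Jensen's inequality, together with the monotonicity of $\ti C$, again forces the $u=0$ increment to dominate.
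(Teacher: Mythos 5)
Your proposal is correct and takes essentially the same route as the paper's proof: one-step local optimality of $u=0$ (Remark 1) combined with the linearity and monotonicity of $\ti C$, iterated over the $N$ steps to yield the lower bound $\ti C^N(L_0) = (1-\eps)^N L_0 \leq E[L(T)]$, attained by $u\equiv 0$. Your packaging via the tower property of conditional expectation is a slightly cleaner rendering of the paper's nested-sum induction, and your closing diagnosis of why linearity is indispensable (and how convexity plus Jensen's inequality replaces it) is exactly the generalization the paper pursues in Theorem \ref{thm:LOandCisGO}.
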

\begin{proof}[Proof {\cite{Jacobs}:}]
The proof of this Theorem is heuristic in that it assumes the continuous-in-time stochastic control problem is well approximated by the discrete-in-time version for $\delta$ sufficiently small.

Since the equation for $dL_u$ is, in general, stochastic, we will denote the possible values of $L_u(t+\delta)$,
given $L(t)=L$, by $\{L_u^i\}$,  where value $i$ occurs with probability $p_i$. Although the SDE (\ref{eqn:Lsde}) implies that the evolution is continuous, we again follow Jacobs in representing it by a discrete variable. Then, starting from the definition of local optimality, $\forall u \in [-1,1]$,
\begin{eqnarray}
 \tilde{C}(L) =   E[\tilde{C}(L)] &\leq E[C_u(L)] = \sum_i p_i \,L_u^i,
\end{eqnarray}
where we have used the fact that $\tilde{C}$ is deterministic. 
Consider the left- and right-most expressions $\ti C(L)$ and $\sum_i p_i \,L_u^i$. As remarked above, $\ti C(L)$ is a function that maps $[0,\half 1] \to [0,\half 1]$. Since both $\ti C(L)$ and $\sum_i p_i \,L_u^i \in [0,\half 1]$, by the monotonicity of $\ti C$, $\exists$ some $L\,' > L$ such that $\sum_i p_i \,L_u^i = \ti C(L\,')$. Using monotonicity of $\ti C$ once more, we arrive at 
\begin{eqnarray}
  \tilde{C}^{2}(L) &\leq \tilde{C}(\sum_i p_i \,L_u^i).
\end{eqnarray}
Jacobs then uses the linearity of the function $\ti C$ to write the RHS as
\begin{eqnarray}
\tilde{C}\Big (\sum_i p_i \,L_u^i \Big ) &= \sum_i p_i \tilde{C}(L_u^i). \label{eqn:Jlinear}
\end{eqnarray}
Then, using local optimality, we have for all $u' \in [-1,1]$,
\begin{eqnarray}
 \tilde{C}\Big (\sum_i p_i \,L_u^i \Big ) &\leq \sum_i p_i E[C_{u'}(L_u^i)], \label{eqn:JLopt}\\
&= \sum_{i,j} p_i p_j {\ }L_{u',u}^{i,j}.
\end{eqnarray}
This procedure can clearly be repeated to cover the entire running time of the protocol to prove that $\tilde{C}^{N}(L)$ is a lower bound on the expected impurity  for  any $N$-step control protocol. Thus, it must be the case that $u(t)\equiv 0$ is a globally optimal protocol in reducing the impurity.
\end{proof}
\subsection{Extension to convex functions}
We notice that Jacobs' proof allows a generalization to convex increment functions, which was not required in Reference~\cite{Jacobs} as the function considered was linear.

\begin{thm} \label{thm:LOandCisGO}
  The protocol $u=0$ is the globally optimal control protocol for the minimization of a cost function $F(L)$, if the increment function $\ti C [F]$ satisfies local optimality
  \begin{equation}
    E[\ti C(F)] \leq E[C_{u}(F)],
  \end{equation}
  and is a convex function of $F$.
\end{thm}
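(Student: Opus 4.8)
The plan is to mirror Jacobs' proof of Theorem~\ref{thm:Jacobs} essentially verbatim, replacing the single step in which the \emph{linearity} of $\ti C$ was invoked, \Eref{eqn:Jlinear}, with \emph{Jensen's inequality} for the convex function $\ti C$. As in that proof, I work in the discrete-time approximation with $N$ steps. The goal is to establish that the $N$-fold composition $\ti C^N(F(L_0))$ is a lower bound on the expected terminal cost $E[F(L)]$ achievable by \emph{any} admissible $N$-step protocol; global optimality of $u=0$ then follows immediately, since repeated application of $u=0$ attains exactly $\ti C^N(F(L_0))$.

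First I would record the one-step bound. By the hypothesis of local optimality and the fact that $\ti C=C_{u=0}$ is deterministic, for any control $u$,
\begin{equation}
  \ti C(F(L)) = E[\ti C(F(L))] \leq E[C_u(F(L))] = \sum_i p_i\, F(L_u^i),
\end{equation}
where $\{L_u^i\}$ are the possible one-step successors of $L$ with probabilities $\{p_i\}$. I would then run an induction on the number of steps, carrying the invariant that, for every $k$-step protocol,
\begin{equation}
  \ti C^k(F(L_0)) \leq \sum_{\pi} P(\pi)\, F(L_\pi),
\end{equation}
the right-hand side being the expected cost summed over all length-$k$ trajectories $\pi$, each occurring with probability $P(\pi)$ and ending at impurity $L_\pi$; the base case $k=1$ is exactly the one-step bound above.

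The inductive step is where the three ingredients combine, in precisely the order used by Jacobs. Starting from the invariant at step $k$, I apply $\ti C$ to both sides and invoke monotonicity of $\ti C$ (Lemma~\ref{lem:MonoC}) to obtain $\ti C^{k+1}(F(L_0)) \leq \ti C\big(\sum_\pi P(\pi) F(L_\pi)\big)$; here I must check that $\sum_\pi P(\pi) F(L_\pi)$ lies in the domain $\ca F$ of $\ti C$, which holds because it is a convex combination of values of $F$ and $\ca F$ is an interval. Next, convexity of $\ti C$ together with Jensen's inequality gives $\ti C\big(\sum_\pi P(\pi) F(L_\pi)\big) \leq \sum_\pi P(\pi)\, \ti C(F(L_\pi))$ --- this is the exact replacement for \Eref{eqn:Jlinear}, and since a linear function is both convex and concave, equality here recovers Jacobs' original argument as a special case. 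Finally, applying the one-step local-optimality bound at each endpoint value $F(L_\pi)$ bounds $\ti C(F(L_\pi))$ by the expected cost of one further step under any control, which reassembles the right-hand side into the expected cost over all length-$(k+1)$ trajectories, closing the induction. Note that this forward peeling applies only a \emph{single} $\ti C$ at each stage, so I need only convexity of $\ti C$ itself and never of its iterates $\ti C^k$.

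The step I expect to be most delicate is verifying that Jensen's inequality is applied with the correct orientation and on a legitimate domain. Convexity of $\ti C$ yields $\ti C(E[X]) \leq E[\ti C(X)]$, which is exactly the direction needed to continue the chain toward the lower bound; were $\ti C$ concave the inequality would reverse and the argument would collapse, so convexity is not merely convenient but is precisely the hypothesis that makes this line of proof go through. The remaining care is to ensure that every argument of $\ti C$ stays within $\ca F$ throughout the iteration --- guaranteed by convexity of $\ca F$ as an interval together with the monotonicity of $\ti C$ --- and to acknowledge, as in Theorem~\ref{thm:Jacobs}, that the whole argument is heuristic to the extent that it relies on the discrete-time approximation of the continuous SDE \Eref{eqn:Lsde} being faithful for small $\delta$.
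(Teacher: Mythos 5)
Your proposal is correct and follows essentially the same route as the paper's proof: the one-step bound from local optimality plus determinism of $u=0$, then monotonicity of $\ti C$ (Lemma~\ref{lem:MonoC}) to compose, then Jensen's inequality in place of Jacobs' linearity step \eref{eqn:Jlinear}, then local optimality again, iterated over the $N$ steps. The only difference is presentational --- you phrase the iteration as an explicit induction over trajectories, whereas the paper writes out one round and says it ``can then be repeated'' --- and your observations about the domain $\ca F$ being an interval and about needing convexity only of $\ti C$ itself (not its iterates) match the paper's reasoning.
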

\begin{proof}
  The proof of this Theorem is a straightforward generalization to that of Theorem \ref{thm:Jacobs}, and is likewise heuristic (in approximating continuous time by discrete time).

   Using the condition that $\ti C$ satisfies local optimality for the cost function $F$, and the fact that the protocol $u=0$ is deterministic, we have
  \begin{equation}
      \ti C [F] \leq E[C_{u}(F)] = \sum_i \,p_i \,F_u^i,
  \end{equation}
  where once again we let F be a discrete variable, and we denote the possible values of $F_u(t+\delta)$ given $F(t)=F$, by $\{F_u^i\}$,  where value $i$ occurs with probability $p_i$. As before, we wish to apply $\ti C$ to both the right-most and left expressions. Since $\ca F$ is an interval, $\sum_i \,p_i \,F_u^i \in \ca F$. Further, since $F$ is strictly monotonic, it is also one-to-one, and $\exists$ some $\bar L\in [0,\half 1]$ such that $F(\bar L) = \sum_i \,p_i \,F_u^i$. We can then use Lemma \ref{lem:MonoC} and arrive at 
  \begin{equation}
    \ti C^2 [F] \leq \ti C \pare{\sum_i \,p_i\,F_u^i}.
  \end{equation}
  Since $\ti C$ is convex, by Jensen's inequality \cite{RudinBook}, we have
  \begin{eqnarray}
    \ti C \pare{\sum_i \,p_i\,F_u^i} &\leq \sum_i \,p_i\, \ti C(F_u^i),\\
    &\leq \sum_{i} p_i E[C_u (F_u^i)], \\
    &= \sum_{i,j} p_i p_j \,F_{u,u}^{i,j}.
  \end{eqnarray}
  This can then be repeated to cover the running time of the protocol. Thus, $\ti C^{N}[F]$ is the lower bound on the expected $F(L)$ for any $N$-step protocol.
\end{proof}
We next show that if the cost function $F$ and increment function $\ti C$ satisfy certain elementary properties, Theorem \ref{thm:LOandCisGO} provides easily testable conditions on the optimality of the protocol $u=0$.
\begin{lem} \label{lem:LOandCisGO}
The conditions of Theorem~\ref{thm:LOandCisGO} are satisfied if \begin{enumerate}
\item $F(L)$ is invertible
\item  $F(L)$ satisfies
\begin{equation}
  \mathcal{D} [F(L)] \geq 0, \label{eqn:LocalOpt}
\end{equation}
where, with a prime denoting differentiation with respect to $L$,
the differential operator $\mathcal{D}[\bullet]$ is defined as
\begin{equation} \label{DefcalD}
   \mathcal{D} [f(L)] \equiv   4L(1-2L)\Big(f'(L) + 2L f''(L) \Big),
\end{equation}
\item $F(L)$ satisfies
\begin{equation}
 L\frac{F''(L)}{F'(L)}F''(L)-LF'''(L)-F''(L) \geq 0, \label{eqn:Convex}
\end{equation}
\item  $\tilde{C}[F]$ is twice differentiable (${\cal C}^2)$ in $\mathcal{F}$.
\end{enumerate}
\end{lem}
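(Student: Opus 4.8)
The plan is to verify the two hypotheses of Theorem~\ref{thm:LOandCisGO} directly from the impurity SDE~\eref{eqn:Lsde}: (a) that $\ti C$ satisfies local optimality, $E[\ti C(F)]\le E[C_u(F)]$ for all admissible $u$, and (b) that $\ti C$ is convex as a function of $F$. Since $\ti C$ and the expectations are understood in the small-time-step sense, I would work throughout to leading order in $\eps=4\delta$. A single It\^o computation of the one-step expected change of the cost under an arbitrary control feeds the local-optimality part, while an expansion of the deterministic increment feeds the convexity part.

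For local optimality, I would read off the drift $\mu_u=-4L[1-(1-2L)u^2]$ and diffusion $\sigma_u=-4L\sqrt{1-2L}\,u$ from \eref{eqn:Lsde} and apply It\^o's rule, $E[dF]=(\mu_u F'+\frac{1}{2}\sigma_u^2 F'')\,dt$. Collecting terms gives
\begin{equation}
 E[dF_u]=-4L\,F'(L)\,dt+u^2\,\mathcal{D}[F(L)]\,dt,
\end{equation}
with $\mathcal{D}$ exactly the operator of \eref{DefcalD}. Because $u=0$ is deterministic, $E[d\ti C(F)]=-4L\,F'(L)\,dt$, so the one-step gap is $E[C_u(F)]-E[\ti C(F)]=u^2\,\mathcal{D}[F(L)]\,dt$. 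As $u^2\ge0$, this is nonnegative for every $u\in[-1,1]$ if and only if $\mathcal{D}[F(L)]\ge0$, which is precisely the second condition, \eref{eqn:LocalOpt}; this establishes local optimality.

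For convexity, I would use the first condition (invertibility, consistent with $F'>0$ from the strict monotonicity of Definition~\ref{def:GenCost}) to regard $\ti C$ as a map on cost values: writing $y=F(L)$ and using $\Delta_{u=0}L=-\eps L$, one has $\ti C(y)=F\big((1-\eps)F^{-1}(y)\big)$. The fourth condition, $\ti C\in\mathcal{C}^2$, guarantees that $d^2\ti C/dy^2$ exists, so convexity reduces to its sign. Differentiating twice by the chain rule and Taylor-expanding about $\eps=0$ (where $\ti C$ is the identity and the second derivative vanishes), using $F'((1-\eps)L)=F'(L)-\eps L\,F''(L)+O(\eps^2)$ and the analogous expansion of $F''$, the order-$\eps^0$ terms cancel and the leading contribution is
\begin{equation}
 \frac{d^2\ti C}{dy^2}=\frac{\eps}{F'(L)^2}\left[\,L\frac{F''(L)^2}{F'(L)}-L\,F'''(L)-F''(L)\right]+O(\eps^2).
\end{equation}
Since $F'(L)^2>0$ and $\eps>0$, nonnegativity of the bracket is exactly the third condition, \eref{eqn:Convex}, giving convexity to the relevant order.

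The main obstacle is the convexity step: the chain-rule differentiation of $y\mapsto F\big((1-\eps)F^{-1}(y)\big)$ together with the bookkeeping of the $\eps$-expansion must be done carefully so that the $O(\eps^0)$ terms cancel and the coefficient of $\eps$ collapses to \eref{eqn:Convex}; the local-optimality step, by contrast, is a short It\^o calculation. I would also emphasize that both conclusions hold to leading order in $\eps$, in keeping with the heuristic discrete-time reading already adopted for Theorem~\ref{thm:LOandCisGO}.
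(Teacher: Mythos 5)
Your proposal is correct and follows essentially the same route as the paper's own proof: an It\^o computation from \eref{eqn:Lsde} yielding $E[dF_u]=\{-4L\,F'+u^2\mathcal{D}[F]\}\,dt$, so that the sign of $\mathcal{D}[F]$ decides local optimality of $u=0$, followed by writing $\ti C(f)=F\big(qF^{-1}(f)\big)$ with $q=1-\eps$, differentiating twice, and expanding to first order in $\eps$ to reduce the sign of $d^2\ti C/df^2$ to the bracket in \eref{eqn:Convex}. Both the intermediate second-derivative expression and the leading-order coefficient you report agree with the paper's \eref{eqn:JConvexMid} and the display following it, so no gap remains.
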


\begin{proof}
  The proof of Lemma \ref{lem:LOandCisGO} can be broken down into first showing that \Eref{eqn:LocalOpt} implies local optimality of $\ti C$, and then showing that \Eref{eqn:Convex} implies convexity of $\ti C$. Local optimality of a protocol $u^*$ according to Definition \ref{def:LocalOpt} implies that the expected increment of the protocol $u^*$ for every time step is minimum. Since the time step $\delta$ can be made arbitrarily small, this intuitively implies that
  \begin{equation}
  E[dF(L)|u^*,L] \leq E[dF(L)|u,L]. \label{eqn:LOint}
\end{equation}
For a function $F(L)$ which is ${\cal C}^2$ in $L$, it can be shown using \Eref{eqn:Lsde} and \ito calculus that
\begin{eqnarray}
  E[dF] &= E\Bigg[-4L \fd{F}{L} dt + u^2 \mathcal{D}[ F] dt \Bigg],
\end{eqnarray}
where ${\cal D}$ is as defined in \Eref{DefcalD}.
Since the inequality in \Eref{eqn:LOint} becomes an equality for the locally optimal protocol, we need to solve the following minimization problem,
\begin{eqnarray}
  \min_u E[dF] &= -4L \fd{F}{L} + \min_u \big \{u^2 \mathcal{D}[ F] \big \} , \\
  &= -4 L \fd{F}{L} + \min \big\{0,\mathcal{D} [F] \big\}, \label{eqn:LocalOptJorWR}
\end{eqnarray}
which evidently reduces to finding the sign of the function $\mathcal{D} [F]$, since $u\in[-1,1]$. Then, $\mathcal{D} [F]\geq0$ implies that Jacobs' protocol,  $u=0$, is the locally optimal protocol. (Conversely,
 $\mathcal{D} [F]\leq0$ would imply  that the Wiseman-Ralph (WR) protocol \cite{WR},  $u=1$, would be the locally optimal protocol.) Thus, \Eref{eqn:LocalOpt} implies local optimality of the protocol $u=0$.

Since we assume that the function $\ti C$ is $\mathcal{C} ^2$ in $\mathcal{F}$, convexity of $\ti C$ also implies a positive second derivative, \ie
\begin{equation}
  \sd{\ti C[F]}{F} \geq 0.
\end{equation}
Now, the increment function $\ti C$ can be written explicitly as
\begin{eqnarray}
  \tilde{C}(F(L)) &= F(L+\Delta_{u=0}L), \\
  &= F(qL), \label{eqn:ConInt}
\end{eqnarray}
where $q=1-\epsilon$ is a constant and in the second line we have used the fact that Jacobs' protocol is linear, \ie~$\ti C = qL$. Letting $f = F(L)$, \Eref{eqn:ConInt} can be written as
\begin{equation}
  \ti C(f) = F(q F^{-1}(f)).
\end{equation}
With this, the second derivative becomes
\begin{equation}
  \sd{\tilde{C}(f)}{f} = \frac{q}{(F'(L))^2} \Bigg\{ q F''(qL) - F'(qL) \frac{F''(L)}{F'(L)} \Bigg\}, \label{eqn:JConvexMid}
\end{equation}
where we have used the abbreviations $L= F^{-1}(f)$. Now recall that $\epsilon = 4 \delta$ is small, so we can expand \Eref{eqn:JConvexMid} to first order in $\eps$, which gives
\begin{equation}
  \sd{\tilde{C}(f)}{f} \approx \frac{\eps}{(F'(L))^2}\Bigg\{ L\frac{F''(L)}{F'(L)}F''(L)-LF'''(L)-F''(L) \Bigg\}.
\end{equation}
Since the terms outside the curly braces are positive, we need only check for the positivity of the expression within the curly braces. The expression within the curly braces is precisely \Eref{eqn:Convex}. Thus, \Eref{eqn:Convex} implies convexity of $\ti C$.

 Since, \Eref{eqn:LocalOpt} and \Eref{eqn:Convex} imply local optimality and  convexity of $\ti C$ repectively, and Theorem \ref{thm:LOandCisGO} states that local optimality and convexity of $\ti C$ implies global optimality, Equations \eref{eqn:LocalOpt} and \eref{eqn:Convex} implies global optimality.
\end{proof}

\subsection{Application to \ren entropies}
For our problem, the \ren entropy $\vS_\alpha$ (\ref{baratheon}) can be written in terms of
$L$ as
\begin{equation}
  \vS_\alpha(L) = \inv{1-\alpha}\ln\Bigg[\pare{\half{1+\sqrt{1-2L}}}^\alpha+\pare{\half{1-\sqrt{1-2L}}}^\alpha \Bigg].
\end{equation}
Since the impurity $L\in[0,\half{1}]$, the above function satisfies the conditions of definition \ref{def:GenCost} and can be shown to be smooth in $L\in[0,\half{1}]$. Theorem \ref{thm:LOandCisGO} then states that Jacobs' protocol is globally optimal for the cost function $S_\alpha(L)$ if the conditions,
\begin{itemize}
  \item[C1. \namedlabel{con:LocalOpt}{C1}]   Local optimality, \ie~$\mathcal{D}[\vS_\alpha(L)]\geq0$, and
  \item[C2. \namedlabel{con:convex}{C2}]   Convexity, \ie~$L \left[\vS_\alpha''(L)\right]^2/{\vS_\alpha'(L)} -L\vS_\alpha'''(L)-\vS_\alpha''(L) \geq 0$,
\end{itemize}
are satisfied.

 We tested both of these conditions numerically over the range $\alpha\in[0,50]$, and the results are summarized in Figure \ref{fig:JProofOpt}. The light red region is where convexity, \ie~condition \ref{con:convex} is satisfied; the light blue region is where local optimality, \ie~condition \ref{con:LocalOpt} is satisfied, and the dark blue region is when both conditions are simultaneously satisfied.
It is worth noting that the Jacobs-style proof only holds for $\alpha\in [0.823,1.103]$, which is the subset of the dark blue region where both local optimality and convexity are satisfied for all $L$.
This restriction is necessary because, as can be seen in both Eqs. \eref{eqn:Jlinear} and \eref{eqn:JLopt}, a non-optimal protocol $u$, may probabilistically increment the impurity $L_i$ towards $L=\half 1$. Thus, Jacobs' proof holds only when both conditions \ref{con:LocalOpt} and \ref{con:convex} hold for $L\in[0,\half 1]$. \begin{figure}[ht!]
\centering
\includegraphics[width = 0.6 \columnwidth]{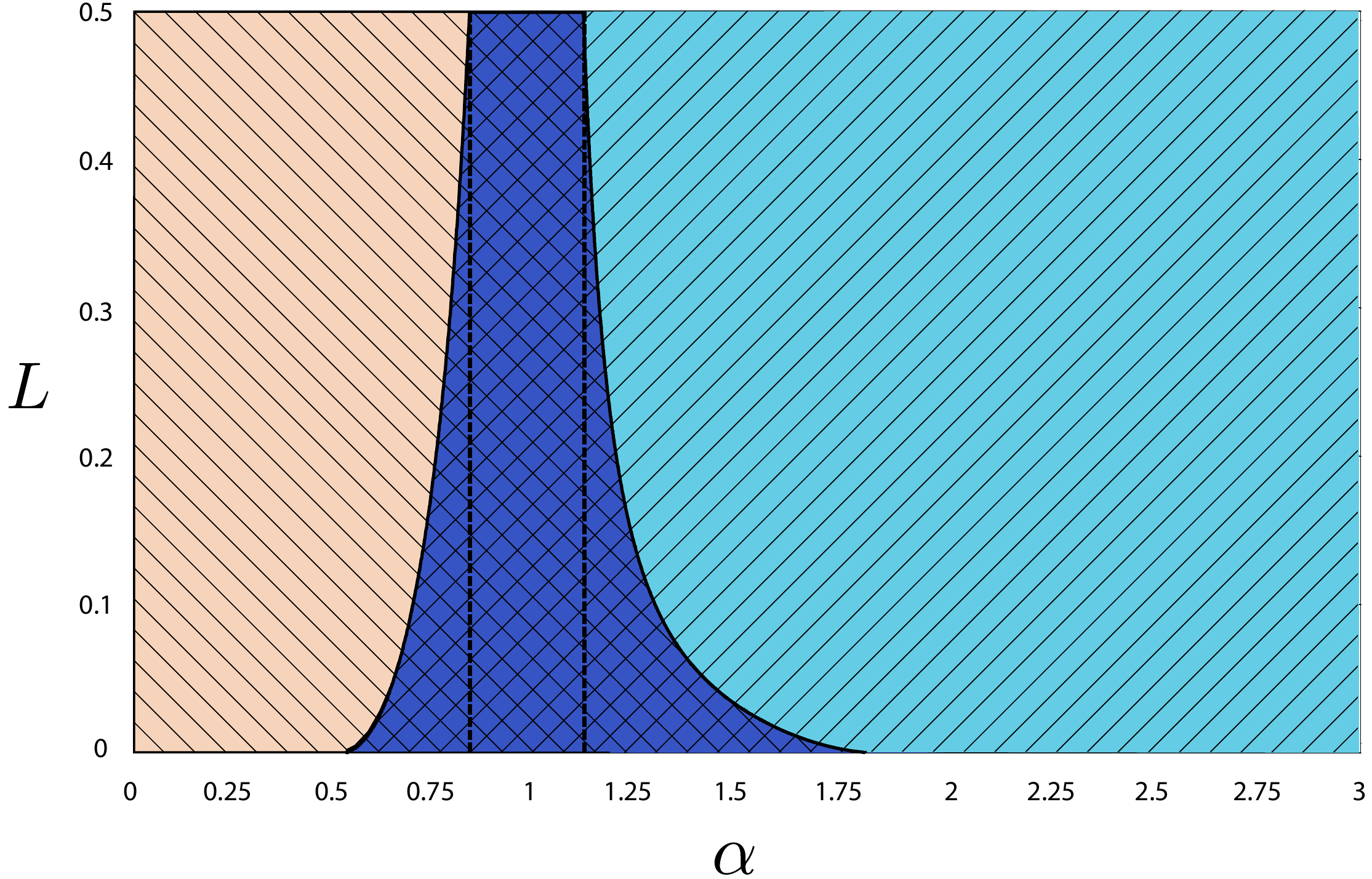}
\caption{The dark blue shaded region is where we have both convexity and local optimality. The light red region is where only convexity is satisfied, and light blue is where only local optimality is satisfied. The x-axis of the graph is the order of the \ren entropy, and the y-axis is the impurity $L$. Then, the region where Jacobs' proof of optimality holds is the region denoted by vertical dashed lines, $\alpha\in[0.823,1.103]$ , since the proof requires that local optimality and convexity be satisfied for all $L\in[0,\half 1]$.} \label{fig:JProofOpt}
\end{figure}

As is evident from Figure \ref{fig:JProofOpt}, the allowed range of $\alpha$ is determined
by examining the conditions for $L=\half 1$. Here,  convexity is satisfied for $\alpha<1.103$ and
local optimality is satisfied for $\alpha\in[0.823,50]$. Furthermore, since the \ren entropy is defined for all $\alpha$, we analytically showed that the min-entropy
\begin{equation}
  \lim_{\alpha\to \infty}\vS_\alpha = -\ln\left (\half{1+\sqrt{1-2L}} \right),
\end{equation}
also satisfies local optimality. Since it is not possible to numerically test all $\alpha>50$, we conjecture that since $\vS_\alpha$ for $\alpha\in [0.823,50] \cup \{\infty\}$ satisfies local optimality, all $\vS_\alpha$ for $\alpha \in [0.823,\infty)$ satisfies local optimality as well. As we will see in the next section, this is relevant for global optimality when we treat the problem using the Bellman's principle of dynamic programming.

\section{Global optimality via Bellman equation} \label{sec:HJB}

In this section, we will derive the Bellman equation for this control problem, and show that for Jacobs' problem, local optimality is equivalent to global optimality.
\\ \\
Let
\begin{equation}
J(u,L,t)=E(F[L_u(T)]|L(t)=L)
 \end{equation}
 be the expected cost function at a final time $T$, given an initial impurity $L$ at time $t\leq T$, with evolution governed by protocol $u \in \ca U$, where the set of allowed control protocols is defined as
 \begin{equation}
     {\cal U} = \{u(t) : [t_0,T] \to [0,1] \}. \label{eqn:SetOfControls}
 \end{equation}
 The globally optimal protocol $u^\star$ (if it exists),  is the protocol which minimizes the above cost function \ie
\begin{equation}
  V(L,t) = J(u^\star,L,t) = \inf_{u\in \ca U} J(u,L,t).
\end{equation}
Where we have denoted $V(L,t)=J(u^\star,L,t)$ as the cost function for the globally optimal protocol. Bellman's principle of dynamic programming then gives,
\begin{equation}
  V(L,t) = \inf_u \{ E(V[L_u(t+h),t+h]) \},
\end{equation}
for $h>0$,  and $t+h\leq T$. The above equation can be intuitively explained as follows. The optimal protocol starting from $t$ must minimize the cost in the interval from $[t,t+h]$, and then proceeding optimally on the time interval $[t+h,T]$, from some impurity $L_u(t+h)$. Dividing the above by $h$ and letting $h\to0$, we arrive at the following Bellman equation,
\begin{eqnarray}
& \fpd{V(L,t)}{t} - 4L \fpd{V(L,t)}{L} +  \inf_u  \Big(u^2 (t)  \ti{\mathcal{D}} [V(L,t)] \Big) \,=\, 0, \\
 \textrm{where }\; &\ti{\mathcal{D}} [V] \,=\, 4L(1-L) \big[ \fpd{V}{L} + 2L \spd{V}{L} \big],
\end{eqnarray}
with the terminal condition $V(L,T) = F(L)$, and we have used the SDE \eref{eqn:Lsde}. This minimization problem is equivalent to
\begin{equation}
  \fpd{V(L,t)}{t} - 4L \fpd{V(L,t)}{L} + \min \{0,  \ti{\mathcal{D}} [ V(L,t)] \}=0, \label{eqn:BellmanEqn}
\end{equation}
since $u\in[-1,1]$.

\subsection{Verification theorem}
The Bellman equation derived in \eref{eqn:BellmanEqn} provides necessary conditions to rigorously show that a control protocol is optimal. This point is best stated by the following theorem:
\begin{thm}[Verification Theorem] \label{thm:verify}
  Let $W(L,t) : [0,\half 1] \times \ca T \to \ca W$, where $\ca T = [t_0,T)$, be a function with the following properties:
    \begin{enumerate}[label=(\alph*)]
      \item W(L,t) is $\ca C^1$ in $\ca T$ and $\ca C^2$ in $[0,\half 1 ]$. \label{con:continuity}
      \item Dynkin's formula:
      \begin{equation}
        E[ W(L(s),s) \big | L,t] - W(L,t) = E\left[ \intto{t}{s} \ca A^u W(L(r),r) dr \,\Big | L(t),t\right],
      \end{equation}
      holds for $t<s<T$, and the operator $\ca A^u$ is defined as
      \begin{equation}
        \ca A^u W(L,t) = \fpd{W(L,t)}{t} - 4L \fpd{W(L,t)}{L} + u^2 \ti D W(L,t),
      \end{equation}
      for some $u \in  \ca U$. \label{con:Dynkin}
      \item W(L,t) is a solution to \Eref{eqn:BellmanEqn}, \ie
  \begin{equation}
    \fpd{W(L,t)}{t} - 4L \fpd{W(L,t)}{L} + \min \{0,  \ti{\mathcal{D}} [ W(L,t)] \}=0, \label{eqn:VeriHypo}
  \end{equation}
  with terminal condition $W(L,T) = F(L)$. \label{con:BellmanAndTerminal}
    \end{enumerate}
Then, the following properties hold:
  \begin{enumerate}
    \item $W(L,t) \leq J(u,L,t)$ for every admissible $u \in \ca U$. \label{prop:InfForAll}
    \item If there exists some $u^* \in \ca U$ such that
    \begin{eqnarray}
      & \fpd{W(L,t)}{t} - 4L \fpd{W(L,t)}{L} + (u^*)^2 \ti{\mathcal{D}} [W(L,t)]  \nn \\
      & =\, \fpd{W(L,t)}{t} - 4L \fpd{W(L,t)}{L} +  \inf_{u\in \ca U}  \Big(u^2 (t)  \ti{\mathcal{D}} [W(L,t)] \Big), \label{eqn:minimum}
    \end{eqnarray}
    then $W(L,t) = V(L,t)$ and $u^*$ is an optimal control protocol. \label{prop:Optimal}
  \end{enumerate}
\end{thm}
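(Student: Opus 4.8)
The plan is to prove both conclusions by applying the hypothesized Dynkin formula \ref{con:Dynkin} over the whole interval $[t,T]$ and exploiting the sign structure that the Bellman equation \ref{con:BellmanAndTerminal} forces upon the generator $\ca A^u$. No fresh stochastic analysis is required beyond what the hypotheses grant: integrability and the validity of Dynkin's identity are assumed in \ref{con:Dynkin}, and the regularity of $W$ in \ref{con:continuity} is precisely what licenses that identity. The argument therefore reduces to elementary algebra together with a single case distinction on the sign of $\ti{\mathcal{D}} W$.

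For conclusion \ref{prop:InfForAll}, I would fix an arbitrary admissible $u \in \ca U$ and specialize \ref{con:Dynkin} to the limit $s \to T$. The terminal condition $W(L,T) = F(L)$ collapses the left-hand side to $J(u,L,t) - W(L,t)$, since by definition $J(u,L,t) = E(F[L_u(T)]\,|\,L(t)=L)$. The key step is then to show that the integrand $\ca A^u W$ is nonnegative for every admissible control. Rearranging the Bellman equation \eref{eqn:VeriHypo} gives $\fpd{W}{t} - 4L\fpd{W}{L} = -\min\{0,\ti{\mathcal{D}} W\}$, and substituting this into $\ca A^u W = \fpd{W}{t} - 4L\fpd{W}{L} + u^2\,\ti{\mathcal{D}} W$ yields $\ca A^u W = u^2\,\ti{\mathcal{D}} W - \min\{0,\ti{\mathcal{D}} W\}$. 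A two-case check, using $u^2 \in [0,1]$, shows $\ca A^u W \geq 0$ pointwise: when $\ti{\mathcal{D}} W \geq 0$ the minimum vanishes and $\ca A^u W = u^2\,\ti{\mathcal{D}} W \geq 0$; when $\ti{\mathcal{D}} W < 0$ one finds $\ca A^u W = (u^2-1)\,\ti{\mathcal{D}} W \geq 0$. Hence the right-hand side of Dynkin's formula is nonnegative and $W(L,t) \leq J(u,L,t)$.

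For conclusion \ref{prop:Optimal}, I would note that the hypothesis \eref{eqn:minimum} states precisely that $u^*$ attains the infimum, \ie~$(u^*)^2\,\ti{\mathcal{D}} W = \min\{0,\ti{\mathcal{D}} W\}$. Reinserting this into the Bellman equation makes $\ca A^{u^*} W = 0$ identically, so repeating the Dynkin computation of the previous paragraph now produces the \emph{equality} $W(L,t) = J(u^*,L,t)$. Combining this with conclusion \ref{prop:InfForAll}, which bounds $W(L,t) \leq J(u,L,t)$ for every admissible $u$, gives $J(u^*,L,t) \leq J(u,L,t)$ for all $u$; thus $u^*$ is optimal and $W = V = \inf_{u\in\ca U} J$.

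The only genuine obstacle is the pointwise inequality $\ca A^u W \geq 0$, and within it the branch $\ti{\mathcal{D}} W < 0$: it is exactly there that the $\min\{0,\cdot\}$ form of the Bellman equation and the control constraint $u^2 \leq 1$ must cooperate for the comparison to close. Everything else is bookkeeping once \ref{con:Dynkin} is taken as given. In a fully rigorous treatment the burden would instead shift to verifying the integrability conditions underlying Dynkin's formula and justifying the passage $s \to T$, but those are assumed here as hypotheses rather than established.
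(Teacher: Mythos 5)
Your argument is correct, but note that the paper does not prove this theorem at all: immediately after stating it, the authors defer the proof to the textbooks of Fleming and Rishel \cite{FlemingRishelBook} and Fleming and Soner \cite{FlemingSonerBook}. What you have supplied is the standard verification argument contained in those references, specialized to this problem, and its core checks out. The heart of it is the identity $\ca A^u W = u^2\, \ti{\mathcal{D}}[W] - \min\{0,\ti{\mathcal{D}}[W]\}$, obtained by substituting the Bellman equation \eref{eqn:VeriHypo} into the generator; the two-case sign check (using $u^2\in[0,1]$) then gives $\ca A^u W \geq 0$ pointwise, while hypothesis \eref{eqn:minimum} forces $(u^*)^2\,\ti{\mathcal{D}}[W] = \min\{0,\ti{\mathcal{D}}[W]\}$ and hence $\ca A^{u^*}W \equiv 0$, so Dynkin's formula converts these pointwise statements into $W(L,t) \leq J(u,L,t)$ and $W(L,t) = J(u^*,L,t)$ respectively, from which $W=V$ and the optimality of $u^*$ follow. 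Two technical points deserve flagging if this were written out in full. First, hypothesis \ref{con:Dynkin} as stated assumes Dynkin's formula only ``for some $u \in \ca U$'', whereas your argument needs it for \emph{every} admissible $u$ (and for $u^*$ in particular); you must read, or restate, that hypothesis as a for-all. Second, since $\ca T = [t_0,T)$ is half-open and Dynkin's formula is assumed only for $s<T$, the passage $s\to T$ that turns $E[W(L(s),s)\,|\,L,t]$ into $J(u,L,t)$ via the terminal condition requires continuity of $W$ up to the terminal time plus a dominated-convergence argument---you correctly identify this as where the real analytic burden sits, and it is exactly what the cited textbooks handle. The trade-off between the two routes is clear: the paper's citation imports those technicalities wholesale, while your proof makes explicit which hypothesis does which job---(a) licenses (b), and (c) together with the control bound $u^2\leq 1$ yields the sign structure---at the cost of leaving the limiting argument informal.
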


For the proof of Theorem \ref{thm:verify} and further details on the verification theorem, the reader is referred to References \cite{FlemingRishelBook} and \cite{FlemingSonerBook}.

 In \Eref{eqn:BellmanEqn}, we are again interested in finding the sign of a function that looks very similar to \Eref{eqn:LocalOpt}. However, it must be noted that in \Eref{eqn:BellmanEqn}, the derivatives are with respect to the initial condition $L(t)$, and not the ``local'' parameters $L$ as in \Eref{eqn:LocalOpt}.
 Nevertheless, as we will show below,  Jacobs' protocol is globally optimal if and only if it is also locally optimal.

\subsection{Equivalence of global optimality and local optimality for a class of problems}
Let us consider a more general class of control problems with the following SDE for the controlled stochastic variable $l$:
\begin{equation}
  dl = - k\, l\Big([1- \beta^2(u,l,t)] \gamma(t) dt+\beta(u,l,t)\sqrt{\gamma(t)} dW\Big), \label{eqn:GenSDE}
\end{equation}
where $u \in \ca M$ is the control {function, chosen from an appropriate set of controls $\ca M$}.  We impose the following constraints on the real functions $\gamma(t),$ $\beta(u,l,t)$ and the positive constant $k$:
\begin{enumerate}
  \item $\gamma > 0$ for  $t\in \ca T$ and  $l\in [l_-,l_+]$.
   \label{con:postivity}
  \item $\exists$ some $\tilde{u}(l,t) \in \ca M$ such that $\beta(\tilde{u},l,t)=0$   $\forall\,l,t$.
  \label{con:zero}
\end{enumerate}
With these, we now proceed by stating and proving the following theorem:
\begin{thm}
Let $F:  [l_-,l_+] \to \ca F$ be a general cost function which is $\ca C^2$ in $[l_-,l_+]$. For the class of control problems satisfying the SDE \eref{eqn:GenSDE}, where $k,\gamma$ and $\beta$ satisfy the constraints (\ref{con:postivity}-\ref{con:zero}), the protocol $\ti u$ is globally optimal (\ie, minimizes the cost function, $F$, at some  final time) if and only if it is also locally optimal for minimizing $F$.
\end{thm}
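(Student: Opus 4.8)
The plan is to exploit the one structural feature that singles out $\tilde u$: by constraint~(\ref{con:zero}) it sets $\beta=0$, so under $\tilde u$ the noise term in \eref{eqn:GenSDE} disappears and $l$ evolves deterministically. This yields a closed-form candidate for the optimal cost-to-go, which I then validate with the Verification Theorem~\ref{thm:verify}. First I would write the Bellman equation for \eref{eqn:GenSDE}. Applying \ito's lemma, the generator of \eref{eqn:GenSDE} acting on a $\ca C^{1,2}$ function splits into a $\beta$-independent drift and a term proportional to $\beta^2$,
\begin{equation}
 \ca A^u W = \fpd{W}{t} - kl\gamma(t)\fpd{W}{l} + \beta^2(u,l,t)\,\tilde{\mathcal D}[W],\qquad \tilde{\mathcal D}[W]\equiv kl\gamma(t)\Big(\fpd{W}{l}+\frac{1}{2}kl\,\spd{W}{l}\Big),
\end{equation}
so, since $u$ enters only through $\beta^2\ge0$, the Bellman equation is $\partial_t V-kl\gamma\,\partial_l V+\inf_u\big(\beta^2\tilde{\mathcal D}[V]\big)=0$ with terminal condition $V(l,T)=F(l)$.

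Next I would integrate $dl=-kl\gamma\,dt$ to get $l(T)=l\,\phi(t)$ and take as candidate the exact cost of running $\tilde u$,
\begin{equation}
 W(l,t)=F\big(l\,\phi(t)\big),\qquad \phi(t)=\exp\!\Big(-k\!\int_t^T\!\gamma(s)\,ds\Big),
\end{equation}
which is $\ca C^{1,2}$ because $F\in\ca C^2$ and $\gamma$ is continuous, and which meets the terminal condition since $\phi(T)=1$. Two short chain-rule computations then drive everything. Writing $m=l\phi(t)$ and using $\phi'=k\gamma\phi$, the deterministic part annihilates the candidate, $\partial_t W-kl\gamma\,\partial_l W=k\gamma m F'(m)-k\gamma m F'(m)=0$, while the stochastic operator reproduces the local-optimality function evaluated at the flowed point,
\begin{equation}
 \tilde{\mathcal D}[W](l,t)=\mathcal D[F]\big(l\phi(t),t\big),\qquad \mathcal D[F](m,t)\equiv km\gamma(t)\Big(F'(m)+\frac{1}{2}km\,F''(m)\Big).
\end{equation}
This sign identity is the crux of the proof: it transfers a statement about the Bellman infimand at $(l,t)$ into a statement about the cost function $F$ on the range of the deterministic flow, and its sign is moreover independent of $t$ because $km\gamma>0$ by constraint~(\ref{con:postivity}).

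With these identities both implications follow. By the general-SDE analogue of the computation that produced \eref{eqn:LocalOptJorWR}, local optimality of $\tilde u$ is exactly the condition $\mathcal D[F]\ge0$ throughout the domain. Assuming it, and using that the flow $l\mapsto l\phi(t)$ keeps $m$ inside the domain, I get $\tilde{\mathcal D}[W]=\mathcal D[F](m,t)\ge0$, so $\inf_u(\beta^2\tilde{\mathcal D}[W])=0$ is attained at $\beta=0$; hence $W$ solves \eref{eqn:VeriHypo}, the infimum \eref{eqn:minimum} is attained by $\tilde u$, and Theorem~\ref{thm:verify} gives $W=V$ and global optimality of $\tilde u$. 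Conversely, if $\tilde u$ is globally optimal then its cost $W=F(l\phi(t))$ is the value function and satisfies the Bellman equation; the cancellation of its deterministic part forces $\inf_u(\beta^2\tilde{\mathcal D}[W])=0$, and since controls with $\beta\ne0$ are available this forces $\tilde{\mathcal D}[W]\ge0$ everywhere. Evaluating at $t=T$, where $\phi=1$, gives $\mathcal D[F](l,T)\ge0$ for all $l$, which (being a $t$-independent sign condition) is local optimality.

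I expect the main obstacle to be the rigorous justification of the two endpoints of this chain rather than the algebra. On the local-implies-global side, the Verification Theorem only certifies a lower bound unless the infimum is genuinely attained everywhere, so I must check that the vanishing-$\beta$ control is admissible at every $(l,t)$ (constraint~(\ref{con:zero})) and that the flowed argument $l\phi(t)$ never leaves $[l_-,l_+]$; the latter holds in the physical case $l_-=0$, where the flow is monotonically decreasing and $\mathcal D[F]$ vanishes at the endpoint $l=0$, but deserves explicit statement. On the converse (global-implies-local) side, the delicate point is that global optimality must actually force $V=W$ together with the regularity for $V$ to solve the Bellman equation classically; this is available because $W$ is smooth, but it is precisely the step that would fail for a non-deterministic protocol, and it must be flagged that the argument is nonvacuous only when $\ca M$ contains controls with $\beta\ne0$.
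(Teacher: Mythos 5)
Your proposal is correct and takes essentially the same route as the paper's own proof: both write the Bellman equation for the general SDE, take the exact cost of the deterministic protocol $\tilde u$ as the candidate value function (the paper in initial coordinates $(l_0,t_0)$, you in current coordinates via $W(l,t)=F(l\phi(t))$), show by the chain rule that the drift part annihilates this candidate, and identify the sign of the remaining $\beta^2$ term with the local-optimality condition $F'(l)+\frac{kl}{2}F''(l)\geq 0$ transported along the deterministic flow. The differences are presentational only: you invoke the Verification Theorem explicitly for the local-implies-global direction and evaluate at $t=T$ for the converse, steps the paper leaves implicit in its statement that the infimum must give $\beta_0=0$.
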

\begin{proof}
From the SDE \eref{eqn:GenSDE}, it can be shown using \ito calculus that the function $F(l)$ obeys
\begin{eqnarray}
  E[dF_u]  &= k\, E\left [-\gamma l\fd{F}{l} + \gamma l \beta^2  \Big(\fd{F}{l} + \half{l\,k} \sd{F}{l} \Big) \right] dt, \label{eqn:genF}
  \end{eqnarray}
 Then, for a protocol $u^*$ to be locally optimal, it must satisfy
  \begin{equation}
    E[dF_{u^*}] = \inf_u E[dF_u].
  \end{equation}
In particular, for the protocol $\ti{u}$ to be locally optimal, we must have
\begin{equation}
\fd{F}{l} + \half{l\,k} \sd{F}{l} \geq 0 \label{eqn:GenFLopt},
\end{equation}
 which follows from applying the condition of positivity (condition (\ref{con:postivity})) in \Eref{eqn:genF}. We now let $\ti{u}$ be a candidate optimal strategy and show the condition for $\ti{u}$ to be the globally optimal strategy. Firstly, we note that the protocol $\tilde{u}$ is deterministic, since $\beta(\ti{u},l,t)$ vanishes in \Eref{eqn:GenSDE}. Then, the variable $l_{\ti{u}}$ satisfies
  \begin{equation}
    dl_{\ti{u}} = -k \gamma(t) l_{\ti{u}} dt.
  \end{equation}
  Solving the above equation using the initial conditions $l(t_0)=l_0$, we get
  \begin{equation}
    l_{\ti{u}}(t) = l_0 e^{-k \int^t_{t_0} \gamma(t')dt'} \label{eqn:TiuSol}.
  \end{equation}
   Next, using the procedure outlined in the preceding section, we can write down the Bellman equation for global optimality in this problem;
 \begin{equation}
   \fpd{V}{t_0} - k\gamma_0 l_0 \fpd{V}{l_0} + \inf_u \bigg( k \gamma_0 l_0 \beta^2_0 \Big[\fpd{V}{l_0} + \half{ l_0\,k} \spd{V}{l_0} \Big]\bigg) = 0,
 \end{equation}
 where we have abbreviated $\beta_0 = \beta(u,l_0,t_0)$ and $\gamma_0=\gamma(t_0)$, and have explicitly denoted the partial derivatives as with respect to the initial coordinates $l_0$ and $t_0$. Since $\ti{u}$ is deterministic as previously noted, $V(l_0,t_0) = E[F(l)|l(t_0) = l_0]=F(l(t,l_0,t_0))$. We then have the relations
 \begin{eqnarray}
   \fpd{V}{t_0} &= \fd{F}{l} \fpd{l}{t_0}, \label{eqn:rel1}\\
   \fpd{V}{l_0} &= \fd{F}{l} \fpd{l}{l_0}, \label{eqn:rel2}
 \end{eqnarray}
 and so,
 \begin{eqnarray}
      \fpd{V}{t_0} - k \gamma_0 l_0 \fpd{V}{l_0} &= \fd{F}{l} \Bigg( \fpd{l}{t_0} - k \gamma_0 l_0\fpd{l}{l_0}\Bigg),
 \end{eqnarray}
 which is evidently equal to $0$, given \Eref{eqn:TiuSol}. Then, the condition for $\ti{u}$ to be the globally optimal protocol is
 \begin{eqnarray}
   \fpd{V}{l_0} + \half{ l_0\,k} \spd{V}{l_0} &\geq 0, \label{eqn:GenFGopt}
 \end{eqnarray}
 since if \Eref{eqn:GenFGopt} were not satisfied, the infimum would not give $\beta_0 = 0$. Using the relations (\ref{eqn:rel1}) and (\ref{eqn:rel2}), \Eref{eqn:GenFGopt} becomes
 \begin{eqnarray}
   e^{-k \int^t_{t_0} \gamma(t')dt'} \Bigg( \fd{F}{l} + \half{ l\,k} \sd{F}{l} \Bigg) &\geq 0,
    \end{eqnarray}
 which is satisfied if and only if
 \begin{eqnarray}
   \fd{F}{l} + \half{ l\,k} \sd{F}{l}  &\geq 0,
 \end{eqnarray}
 which is evidently the same condition as \Eref{eqn:GenFLopt}. Thus, the protocol $\ti u$ is globally optimal if and only if it is also locally optimal.
\end{proof}
For our original problem, we have $l=L$, $k=4$, $\gamma=1$, and $\beta = u\sqrt{1-2L} $, which can be easily verified to satisfy the constraints (\ref{con:postivity}-\ref{con:zero}). Also, the protocol $\ti{u}$ is $u=0$, which is Jacobs' protocol. Thus, Jacobs' protocol is globally optimal at minimizing the \ren entropy of order $\alpha>0.823$ $\forall\,L,t$.

\section{ Local optimality everywhere does {\em not} imply global optimality in general} \label{sec:WRNotOptimal}
We now focus on the region where $\alpha<0.5$. {As can be seen from Figure \ref{fig:JProofOpt}, the protocol $u=0$ is not locally optimal $\forall\,L\in [0,\half 1]$ and $\alpha<0.5$. As explained previously in the paragraph following \Eref{eqn:LocalOptJorWR}, this implies that the WR protocol $u^2=1$, is locally optimal for this range of parameters.} However, in this section we numerically showed that it does not satisfy \Eref{eqn:BellmanEqn}.   For $u^2=1$, the protocol is no longer deterministic, but we can still obtain the cost for this protocol semi-analytically using linear trajectory theory \cite{Wis96a,JacobsKnight}.  For this null-control protocol \cite{WR}, we have  $x(t)=0$  $\forall\,t$, so that the eigenvalues of $\rho$ are
$\lambda_\pm=(1\pm z)/2$ and $L=2\lambda_-\lambda_+$.
 Then, the expected \ren entropy of order $\alpha$, given $L(t_0) = L_0$ is
\begin{equation}
E(\vS_\alpha(t)|L_0,t_0) = \intall \, \vS_\alpha(L_0,w) \wp(w,L_0,t-t_0) \,dw, \label{eqn:WRCostFun}
\end{equation}
where
\begin{eqnarray}
  \vS_\alpha(L_0,w) &= \inv{1-\alpha} \ln \left[ \pare{\half{1+z(w,L_0)}}^\alpha+\pare{\half{1-z(w,L_0)}}^\alpha\right],\\
  z(w,L_0) &= \frac{\sqrt{1-2L_0}\cosh(2w) + \sinh(2w)}{\cosh(2w)+\sqrt{1-2L_0}\sinh(2w)},
\end{eqnarray}
and the distribution $\wp$ is
\begin{equation}
  \wp(w,L_0,\tau) = \Big(\cosh(2w) + \sqrt{1-2L_0} \sinh(2w)\Big) \frac{e^{-w^2/(2\tau)}}{\sqrt{2\pi \tau}}e^{-2\tau}.
\end{equation}
Using the above equations with $V(L,t) = E(\vS_\alpha(T)|L,t)$, we numerically showed that for $\alpha<0.5$,
\begin{equation}
   \ti{\mathcal{D}} [V] < 0\qquad \forall\,L\;{\rm and }\;t,
\end{equation}
implying that the infimum in \Eref{eqn:BellmanEqn} {gives $u^2=1$ as expected}. However, using $V(L,t)$ given by \Eref{eqn:WRCostFun}, we numerically showed that
\begin{equation} \label{tbref}
  \fpd{V(L,t)}{t} - 4L \fpd{V(L,t)}{L} +  \ti{\mathcal{D}}  [V(L,t)] \neq  0,
\end{equation}
 for any $0< \alpha <0.5$, $L\in [0,\half 1]$ and $\tau \equiv T-t \in(0,3]$. Thus, the protocol $u=1$, despite being locally optimal, does not satisfy the Bellman equation for minimization of the \ren entropies of order $\alpha<0.5$.

\subsection{Understanding the process of verification}
In the previous section, we have shown numerically that the protocol $u(t)=1$ $\forall t\in\ca T$ does not satisfy the Bellman equation $\forall\, t\in\ca T$ for {\em any} value of $\alpha$, even those values for which it is the locally optimal protocol for all $L \in [0,\half1]$.

The Verification Theorem \ref{thm:verify} states that if a solution to the Bellman equation \eref{eqn:BellmanEqn} is found, and some control $u^*$ exists such that it achieves the minimum in \Eref{eqn:minimum}, then the control $u^*$ is the optimal protocol. This theorem does {\em not} imply the inverse \ie~a protocol $u$ and corresponding function $W$ which attains the minimum in \Eref{eqn:minimum} but does not satisfy the Bellman equation cannot be proven not to be the optimal protocol.

To prove the converse, we require the assumptions:
 \begin{enumerate}
   \item The set of admissible control functions defined in \Eref{eqn:SetOfControls} is compact \label{con:Compact}
  \item $\partial^2 W(L,t) / \partial L^2$ is continuous. \label{con:2ndDerCont}
\end{enumerate}
Then, by Lemma 6.1 and Theorem 6.1 of Reference \cite[Chapter VI]{FlemingRishelBook}, a protocol $u$ and corresponding cost function $J$ which does not satisfy the Bellman equation is not the optimal protocol.

 The set of admissible control functions is defined in \Eref{eqn:SetOfControls} as the set of all functions which map the time interval $[t_0,T]$ to $[0,1]$. Without additional constraints we are not confident that the assumption of compactness \eref{con:Compact} applies. Hence we cannot say with certainty that the Wiseman-Ralph protocol  is not an optimal protocol, even though it does not satisfy the Bellman equation \eref{eqn:BellmanEqn}. The question deserves further study.

\section{Conclusion} \label{sec:conclude}
In this paper, we have generalized Jacobs' problem of controlling the measurement-induced purification of a qubit, using the $\alpha$-family of \ren entropies at the final time as the cost function. We have generalized Jacobs' original simple proof of optimality, which enables it to be applied to a finite range, $\alpha\in[0.823,1.103]$, which includes the von Neumann entropy ($\alpha = 1$). Surprisingly,  applying rigorous verification theorems shows that Jacobs' protocol is globally optimal even when it {\em cannot} be shown to be so by his method of proof.  In fact, we showed rigorously, that Jacobs' protocol is globally optimal in the region which it is also locally optimal. This locally optimal region we numerically showed to be $\alpha \in [0.823,50] \cup \{\infty\}$, which led us to conjecture that the full region for which the protocol is locally optimal is $\alpha > 0.823$.

 Our method allows one to use intuitive methods to prove the global optimality of a restricted class of control protocols. Of course such methods do not, in general, prove global optimality. This is a restatement of the fact that the condition of a protocol being locally optimal everywhere is not equivalent to global optimality of the protocol. Indeed, we showed that the Wiseman-Ralph protocol, which is locally optimal for all $L\in [0,\half 1]$ and $\alpha<0.5$, cannot be verified to be the globally  optimal protocol. In future work one could investigate whether it is possible to find protocols that can be verified as globally optimal in this regime, and also in the intermediate regime, $\alpha \in [0.5,0.823]$. One could also compare the expected costs (\ie the purification rate) of different protocols in the whole parameter space of \ren entropies.


\ack
 The authors thank Matthew James and Ian Petersen for helpful discussions regarding the verification theorem. This work was supported by the Australian Research Council Centre of Excellence CE110001027, and the ARO MURI grant W911NF-11-1-0268. JC is supported in part by National Science Foundation Grant Nos. PHY-1212445, PHY-1005540 and PHY-1314763, by Office of Naval Research Grant No. N00014-11-1-0082. CT acknowledges support from the core grant of CQT, a Research Center of Excellence of the National Research Fund and the Ministry of Education of Singapore and hospitality from the Centre of Quantum Dynamics and Griffith University, where this study was performed.

\section*{References}


\begin{thebibliography}{10}

\bibitem{Bel87}
Belavkin V P, 1987, \comm{``Non-demolition measurement and control in quantum
dynamical systems'', }{\it Information, complexity, and control in
quantum physics}, ed A Blaqui\`ere \etal \,(New-York: Springer)

\bibitem{WisMil93b} 
Wiseman H M and Milburn G J, 1993,
\PRL {\bf 70}, 548.

\bibitem{Doh00}
Doherty A C, Habib S, Jacobs K, Mabuchi H and Tan S M, 2000,
\PR A. {\bf 62}, 012105 (2000).

\bibitem{VarBru2007}
Varbanov M and Brun T~A, 2007, \comm{``Decomposing generalized measurements into continuous stochastic processes'', }\PR A. {\bf 76}, 032104

\bibitem{WisMil10} Wiseman H~M and Milburn G~J, 2010, {\it Quantum Measurement and Control} (United Kingdom: Cambridge University Press)

\bibitem{BusRotWil2006}
Bushev P, Rotter D, Wilson A, Dubin F, Becher C, Eschner J, Blatt R, Steixner V, Rabl P and Zoller P, 2006, \comm{``Feedback Cooling of a Single Trapped Ion'', }\PRL {\bf 96}, 043003


\bibitem{VijMacSli2012}
Vijay R, Macklin C,	Slichter D~H, Weber S~J, Murch K~W, Naik R, Korotkov A~N, and Siddiqi I, 2012, \comm{``Stabilizing Rabi oscillations in a superconducting qubit using quantum feedback'', }{\it Nature} {\bf 490}, 77


\bibitem{ZhoDotPea2012}
Zhou X, Dotsenko I, Peaudecerf B, Rybarczyk T, Sayrin C, Gleyzes S, Raimond J~M, Brune M, and Haroche S, 2012, \comm{ ``Field Locked to a Fock State by Quantum Feedback with Single Photon Corrections'',
}\PRL {\bf 108}, 243602


\bibitem{DohJac1999}
Doherty A~C and Jacobs K, 1999, \comm{ ``Feedback control of quantum systems using continuous state estimation'', }\PR A. {\bf 60}, 2700


\bibitem{BouEdwBel2005}
Bouten L, Edwards S, Belavkin V~P, 2005, \comm{ ``Bellman equations for optimal feedback control of qubit states'', }\JPB {\bf 38}, 151


\bibitem{GouBelSmol2005}
Gough J, Belavkin V~P, Smolyanov O~G, 2005, \comm{ ``Hamilton-Jacobi-Bellman equations for Quantum Filtering and Control'', }\JPB {\bf 7}, S237


\bibitem{WisDoh05} Wiseman H~M and Doherty A~C, 2005, \comm{``Optimal Unravellings for Feedback Control in Linear Quantum Systems'', }\PRL {\bf 94}, 070405

\bibitem{Jacobs}
Jacobs K, 2003, \comm{``How to project qubits faster using quantum feedback'', }\PR A. {\bf 67}, 030301(R)

\bibitem{WR}
Wiseman H~M and Ralph J~F, 2006, \comm{ ``Reconsidering rapid qubit purification by feedback'', }\NJP {\bf 8}, 90

\bibitem{WB}
Wiseman H~M and Bouten L, 2006, \comm{ ``Optimality of Feedback Control Strategies for Qubit Purification'', }{\it Quantum Inf. Process.} {\bf 7}, 71-83

\bibitem{BelNegMol09}
Belavkin V~P, Negretti A, and M\o lmer K, 2009, \comm{ ``Dynamical programming of continuously observed quantum systems'', }\PR A. {\bf 79}, 022123

\bibitem{ShaJac08}
Shabani A and Jacobs K, 2008, \comm{ ``Locally Optimal Control of Quantum Systems with Strong Feedback'', }\PRL {\bf 101}, 230403

\bibitem{JacSha08}
Jacobs K, Shabani A, 2008, \comm{ ``Quantum feedback control: how to use verification theorems and viscosity solutions to find optimal protocols'', }{\it Contemp. Phys.} {\bf 49}, Iss. 6, 435

\bibitem{ComWisJac08}
Combes J, Wiseman H M, and Jacobs K, 2008, \comm{``Rapid Measurement of Quantum Systems Using Feedback Control'',} \PRL {\bf 100}, 160503.

\bibitem{ComWisJacOCo08}
Combes J, Wiseman H M, Jacobs K, and O'Connor A J, 2010, \comm{``Rapid purification of quantum systems by measuring in a feedback-controlled unbiased basis'',} \PR A. {\bf 82}, 022307. 

\bibitem{ComWis11b}
Combes J and Wiseman H M, 2011,\comm{``Maximum Information Gain in Weak or Continuous Measurements of Qudits: Complementarity Is Not Enough'',} \PR X. {\bf 1}, 011012.


\bibitem{ComWisSco10}
Combes J, Wiseman H M, and Scott A J, 2010, \comm{``Replacing quantum feedback with open-loop control and quantum filtering'',} \PR A. {\bf 81}, 020301(R).

\bibitem{LiShaSar13}
Li H, Shabani A, Sarovar M, and Birgitta Whaley K, 2013, \comm{ ``Optimality of qubit purification protocols in the presence of imperfections'', }\PR A. {\bf 87}, 032334

\bibitem{ComWis11a}
Combes J and Wiseman H M, 2011, \comm{``Quantum feedback for rapid state preparation in the presence of control imperfections'',} \JPB {\bf 44}, 154008.

\bibitem{baez}
Baez J~C, 2011, Renyi Entropy and Free Energy {\it Preprint} arXiv:1102.2098


\bibitem{DohJacJun01}
Doherty A~C, Jacobs K, and Jungman G, 2001, \comm{ ``Information, disturbance, and Hamiltonian quantum feedback control'', }\PR A. {\bf 63}, 062306

\bibitem{RudinBook} Rudin W, 1987, \emph{Real and complex analysis (3rd)} (New York: Mcgraw-Hill)

\bibitem{FlemingRishelBook} Fleming W~H and Rishel R~W, 1986, \emph{Deterministic and Stochastic Optimal Control} (Berlin: Springer)

\bibitem{FlemingSonerBook} Fleming W~H and Soner H~M, 2006, \emph{Controlled Markov Processes and Viscosity Solutions} (Berlin: Springer)

\bibitem{Wis96a}
 Wiseman H M, 1996, Quantum Semiclass. Opt {\bf 8}, 205.

\bibitem{JacobsKnight}
Jacobs K and Knight P~L, 1998, \comm{``Linear quantum trajectories: Applications to continuous projection measurements'', }\PR A. {\bf 57}, 2301


\end{thebibliography}
\end{document}